\documentclass[copyright,creativecommons]{eptcs}
\usepackage{breakurl}             

\usepackage[english]{babel}
\usepackage{amsmath, amssymb, amsthm}
\usepackage[all]{xy}

%

\newtheorem{lemma}{Lemma}
\newtheorem{corollary}{Corollary}
\newtheorem{theorem}{Theorem}
\theoremstyle{definition}
\newtheorem{definition}{Definition}
\newtheorem{example}{Example}

\title{Interacting via the Heap in the Presence of Recursion}

\author {Jurriaan Rot
         \institute{LIACS, Leiden University\\
         The Netherlands}
         \email{jrot@liacs.nl}
         \and
         Irina M\u{a}riuca As\u{a}voae
		 \institute{Faculty of Computer Science \\
		 Alexandru Ioan Cuza University\\
		 Romania}
		 \email{mariuca.asavoae@info.uaic.ro}
		 \and
         Frank de Boer
         \institute{Centrum Wiskunde en Informatica (CWI)\\
         The Netherlands}
         \institute{LIACS, Leiden University\\
         The Netherlands}
         \email{frb@cwi.nl}
         \and
         Marcello M. Bonsangue
         \institute{LIACS, Leiden University\\
         The Netherlands}
         \email{marcello@liacs.nl}
         \and
         Dorel Lucanu
		\institute{Faculty of Computer Science \\
		Alexandru Ioan Cuza University\\
		Romania}
		\email{dlucanu@info.uaic.ro}
         }

\begin{document}
\maketitle

\begin{abstract}
Almost all modern imperative programming languages include operations
for dynamically manipulating the heap, for example by allocating and
deallocating objects, and by updating reference fields. In the presence
of recursive procedures and local variables the interactions of a
program with the heap can become rather complex, as  an unbounded number
of objects can be allocated either on the call stack using local variables,
or, anonymously, on the heap using reference fields. As such a static
analysis is, in general, undecidable.

In this paper we study the verification of recursive programs with unbounded
allocation of objects, in a simple imperative language for heap manipulation.
We present an improved semantics for this language, using
an abstraction that is precise. For any program with a bounded
visible heap, meaning that the number of objects reachable from
variables at any point of execution is bounded, this abstraction
is a finitary representation of its behaviour, even though
an unbounded number of objects can appear in the state.
As a consequence, for such programs model checking is decidable. Finally we
introduce a specification language for temporal properties
of the heap, and discuss model checking these properties
against heap-manipulating programs.

\end{abstract}

\section{Introduction}

One of the major problems in model checking recursive programs
which manipulate dynamic linked structures is that the state space
is infinite, since programs may allocate an unbounded number of
objects during execution by updating reference fields (pointers). Indeed
model checking and reachability for such programs are undecidable, in general.
Consequently to allow a restricted form of model checking we need to impose
either some syntactic restrictions on the program \cite{Dam96} or some
suitable bounds on its model.  A natural bound for model checking programs
without necessarily restricting their capability of allocating an unbounded
number of objects is to impose constraints on the size of the \emph{visible}
heap~\cite{BFQ07}. The visible heap
consists of those objects which are reachable from the variables
in the scope of the currently executed procedure. Such a bound still allows
for storage of an unbounded number of objects onto the call-stack,
using local variables.

In this paper we introduce a method for model checking
sequential imperative programs with pointers and recursive procedure calls.
In order to allow implementation of model checking of unbounded object allocation in the context
of a bounded visible heap, we introduce a new mechanism for the generation of fresh
object identities which allows for the \emph{reuse}
of object identities and which includes a \emph{renaming} scheme to resolve possible
resulting name clashes.
We introduce a formal operational semantics based on this
mechanism for an abstract programming language, called Shylock. Subsequently
we introduce a logic for reasoning about properties of the heap, where
we use atomic propositions defined as regular expressions in what
is basically a Kleene algebra with tests \cite{Kozen:1997:KAT:256167.256195}.
Namely, the global and local variables of a program  are used as \emph{nominals},
whereas the pointers (reference fields) constitute the set of basic actions.

Our renaming mechanism allows a different kind of reuse of object identities than
usual garbage collection techniques. A garbage collector typically reuses object
identities from the heap and considers objects on the call stack as still in use.
In contrast our technique is more tailored towards model checking, and as such,
we need to reuse as much object identities as possible to guarantee a representation
of program behaviour in terms of a finite pushdown system with a finite
stack alphabet. In fact, our mechanism allows to reuse objects allocated in the call stack,
that may become active when procedures return.

\paragraph{Structure of the paper} In the next paragraph we briefly discuss related work. 
We introduce Shylock and its formal semantics in Section~\ref{defShy}.
In Section~\ref{sec-impr} the abstraction of this semantics is introduced, together
with a proof of its correctness. Then in Section~\ref{sec-mc} we define a logic
for temporal properties of heaps, and finally in Section~\ref{sec-concl} we conclude.

\paragraph{Related work}
We introduce a novel technique for resolving name clashes in the context of reuse
of object identities. It is based on the concept of \emph{cut points} as introduced in
\cite{DBLP:conf/popl/RinetzkyBRSW05} to support static analysis via abstract interpretation
techniques. Cut points are objects in the heap that are referred to from both local
and global variables, and as such are subject to modifications during a procedure call.
Recording cut points in extra logical variables allows for a precise abstract execution
of the program, which in  case of a bound on the visible heap can be represented by a finitary
structure, namely that of a \emph{finite pushdown system}. 

In~\cite{BFQ07} a language is studied with the same features as our Shylock  programs extended with a bounded form of concurrency. Because concurrency is an orthogonal dimension to the
vertical growing of the number of objects due to recursion and the horizontal growing
due to the anonymous field update, we have decided not to incorporate it in our
Shylock language. In fact, bounded concurrency could easily be handled with a
technique similar to the one used in~\cite{BFQ07}. The novelty of our work is not
in the decidability result, which is indeed similar to that obtained in~\cite{BFQ07},
but in the technique we used to obtain it. While~\cite{BFQ07} uses finite graphs and
graph isomorphisms to represent heaps and avoid name clashes, respectively, our approach
is purely symbolic, and, therefore, directly usable for model checking temporal
properties of heaps. We discuss the relationship with~\cite{BFQ07} in more detail in the final paragraph of Section \ref{sec-mc}.

Currently there are several model checkers for object oriented languages.
Java Path Finder~\cite{HP00} is basically a Java Virtual Machine that
executes a Java program not just once but in all possible ways, using
backtracking and restoring the state during the state-space exploration.
Even if Java Path Finder is capable of checking every Java program, the
number of states stored during the exploration is a limit on what can be
effectively checked. As with JCAT~\cite{DIS99}, Java source code can be
translated into Promela, the input language of SPIN. Since
Promela does not support dynamic data structures, fixed-size
heaps and stacks have to be allocated.

Bandera~\cite{CDH00} is an integrated collection of tools for
model checking concurrent Java software using state-of-the art
abstraction, partial order reductions and slicing techniques to reduce the
state space.
It compiles Java source code into a reduced program model expressed in the
input language of other existing verification tools. For example, it can
be combined with the SAL (Symbolic Analysis Laboratory) model
checker~\cite{PSS00} that uses unbounded arrays whose sizes vary dynamically
to store objects. In order to explore all reachable states model checking
is restricted to Java programs with a bounded (but not fixed a priori)
number of \emph{created} objects.

TOPICS \cite{LS10,FLS09} is a tool which aims to find certain types of bugs in non-recursive C programs which manipulate a restricted type of heaps containing only single-linked lists. The faults detected  by TOPICS are of several types, namely: memory leak, segmentation faults, array out of bounds errors, and usage of undefined objects in tests. The method used for bug-detection is based on reachability analysis on counter machines. The transformation of a C program into a counter machine goes through the intermediate representation of pointer machines which abstracts away the contents of the cells of the linked lists.

The problem of reusing object identities has already been faced when
defining semantic models for the pi-calculus, most notably in
history-dependent automata~\cite{MP98}, a model based on the theory of named
sets capable of finite-state verification of processes that can allocate fresh
resources~\cite{CM10}. Model checking of a possibly unbounded number of objects
with pointers but for a language with a restricted form of recursion (tail recursion)
and no block structure has been studied using high level allocation B\"{u}chi
automata~\cite{DKR04} that allow for a finite state symbolic semantics very
similar to ours. Full recursion, but with a fixed-size number of objects is
instead considered in jMoped~\cite{ES01}, using a pushdown structure to
generate an infinite state system.

The techniques described in this paper aim at verifying programs by model checking. This is
fundamentally different from other tools and techniques for verifying programs manipulating
the heap by deductive verification methods, such as separation logic~\cite{Reynolds2002}. Automated methods
for proving annotated programs is a very active area of research (see e.g.~\cite{Madhusudan11,Berdine04,BBR12}). For a more
detailed discussion we refer to~\cite{BBR12}.

\paragraph{Acknowledgments} We would like to thank the anonymous referees for their 
extensive comments and suggestions that greatly improved the presentation of our work.
The research of Jurriaan Rot has been supported by the Dutch NWO project CoRE.

\section{Shylock: a language to manipulate the heap}\label{defShy}

In this section we introduce \emph{Shylock}, a simple imperative programming
language that allows us to focus on dynamic pointer structures in the
context of recursive procedures with local variables. Programs consist of a
set of recursive procedures that can create new objects, and store them into
their local or global variables. Besides being dynamically allocated, objects
can be referenced by other objects via object fields, and exist as long as they are
reachable in the heap from some other object or from a variable. To simplify the presentation
objects are the only data structure of Shylock.

We assume an infinite set $V$ of \emph{variables} ranged over by $x,y$, including a finite
collection $G$ of \emph{global program variables} $\{g_1, g_2, \ldots, g_n\}$, and a
disjoint finite set $L$ of \emph{local program variables} $\{l_1, l_2, \ldots, l_m \}$.
We denote by $C$ the infinite set $V \setminus (G \cup L)$  of \emph{cut point variables} ranged over
by $c_1, c_2, \ldots$. Further, we assume a distinguished element $nil \in G$, used
as a constant to refer to the undefined object. A Shylock program acts only on global and local
variables, cut point variables will be used later in the abstract semantics as a kind
of ``freeze'' variables for storing relevant points of the heap during procedure call.
For simplicity we assume that all objects have the same set of \emph{fields}
$F = \{f_1, \ldots, f_k\}$. We denote by $\bar{g}, \bar{l}, \bar{c}, \bar{f}$ the
sequences of all global, local, cut point variables and fields, respectively.

For $P$ a finite set of \emph{procedure names} $\{p_0, \ldots, p_l\}$, a program is a set
of \emph{procedure declarations} of the form $p_i~::~B_i$, where $B_i$, denoting the \emph{body}
of the procedure $p_i$, is a statement defined by the following grammar:
\begin{align*}
B \; {:}{:}{=} & \; x.f ~{:}{=}~ y \;|\; x ~{:}{=}~ y.f \;|\; x ~{:}{=}~ \text{new} \;|
           \; [x=y] B \;|\; [x\neq y] B \;|\; B+B \;|\; B;~B \;|\; p \,
\end{align*}
Here $x$ and $y$ are (local or global) program variables ranging over $G \cup L$,
$f$ is a field in $F$ and $p$ is a procedure name in $P$. We assume a distinguished
$p_0 \in P$, called the \emph{initial} procedure of a program.

The \emph{assignment} statements $x.f ~{:}{=}~ y$ or $x ~{:}{=}~ y.f$ assign the
identity of the object referenced at right hand side of the assignment
to the field or variable, respectively, at the left hand side. The statement
$x~{:}{=}~ \text{new}$ \emph{creates} a new object that will be referenced by the
program variable  $x$. All fields of $x$ will reference the undefined object $nil$.
We  restrict to programs in which the variable $nil$ does not appear in the
left-hand side of an assignment or object creation, i.e., $nil$ is a constant.
\emph{Conditional statements} $[x=y]B$ and $[x \neq y] B$, \emph{nondeterministic choice}
$B_1 + B_2$, and \emph{sequential composition} $B_1;B_2$, have the standard interpretation.
A \emph{procedure call} $p$ means that the body $B$ associated with $p$ is executed next on
the same global state but on a fresh local state. After the procedure body terminates,
its local state is destroyed forever and the previous local state (from which the procedure has
been called) is restored. Changes to the global state, however, remain.

Notice that variable assignment $x:=y.f$ and field update $x.f:=y$ suffice, as more general expressions and updates
can be encoded. For example, a statement $x := y.f_{i_1} \ldots f_{i_k}$ is
encoded as $x:=y.f_{i_1};x:=x.f_{i_2}; x:=x.f_{i_3}; \ldots ; x:=x.f_{i_k}$. A basic variable assignment
$x := y$ can be encoded as $z.f := y ; x := z.f$.
More general boolean expressions in conditional statements can be obtained by using sequential composition and
nondeterministic choice. In fact $(b_1 \wedge b_2) B$ can be written as $(b_1) b_2 B$,
whereas  $(b_1 \vee b_2) B$ as $(b_1 B) + (b_2 B)$. Negation of a boolean expression
$b$ can be obtained by transforming $b$ into an equivalent boolean expression in conjunctive disjunctive normal form,
for which negation of the simple expression $[x=y]$ and $[x \neq y]$ is defined as expected.
Ordinary while, skip, and if-then-else statements can be expressed easily in the language, using recursive procedures,
conditional statements and nondeterministic choice. For the sake of simplicity, we allow creation and assignment of a
single object identity only; generalizations to simultaneous assignments and object creation can be added in a
straightforward manner.
The language does not directly support parameter passing. However, it is worthwhile to note that we can model
procedures with call-by-value parameters by means of global variables. Let $p(v_1, \ldots, v_n)$ be a procedure with
formal parameters $v_1, \ldots, v_n$. We see the formal parameters as local variables and introduce for each parameter
$v_i$ a corresponding global variable $g_i$ (which does not appear in the given program). Every procedure call
$p(x_1,\ldots, x_n)$ can then be encoded by the statement $g_{1}:=x_1; \ldots; g_{n}:=x_n; p$  whereas the body
$B$ of $p(v_1, \ldots, v_n)$ can be encoded by $v_1 := g_{1}; \ldots; v_n := g_{n}; B$. A similar approach
can be taken to model procedures with return values. Finally, method calls $x.m(x_1,\ldots,x_n)$ can be modeled
by introducing the called object $x$ as an additional parameter of the procedure $m$.

\begin{example}\label{ex-file}
We consider a simple example program which opens a file, passes it to some procedure which returns again
a file, and finally tries to close this returned file\footnote{This idea was suggested to us by Dilian Gurov.}. It
consists of the procedures \emph{main}, \emph{q}, \emph{open} and \emph{close}, and the sets of global and local
variables are $G=\{nil\}$ and $L=\{x,y\}$ respectively. The procedures \emph{main}, \emph{open} and \emph{close} are defined as follows:
\begin{align*}
main & :: \text{open}(x);~y := q(x);~close(y) \\
open & :: x := new \\
close(z) & :: [z \neq nil] z := nil
\end{align*}
The definition of $q$ is left open. Recall from the above discussion that while parameter passing and return values
are not directly in the syntax of the language, they can easily be encoded.
The intuition behind this program is as follows. We start by executing \emph{main}.
Then first the program opens a file, modeled by an object creation, and then it passes the reference $x$ to this file
on to a procedure $q$. This procedure $q$ then performs some calculations and passes back a file reference $y$. Finally
we try to close the file referenced by $y$. Closing a file is modeled by first checking if the given reference
is not \emph{nil}, and then simply setting the reference to $nil$. If we pass \emph{close} a reference to \emph{nil}, then
the program crashes. \qed
\end{example}


In order to describe the formal semantics of Shylock programs, we first formalize some relevant
notions related to the heap. To represent object identities we use the set $\mathbb{N}_\bot = \mathbb{N} \cup \{\bot\}$ of natural numbers
extended with an element $\bot$ and ranged over by $n,m$.
Let $s: V \rightarrow \mathbb{N}_\bot$ be a variable assignment and
$h: F \rightarrow (\mathbb{N}_\bot \rightarrow \mathbb{N}_\bot)$ be a field assignment
such that for all $f$, $h(f)(\bot)=\bot$ and the set of objects for which $h(f)(n) \neq \bot$ is finite.
A \emph{heap} $H$ is a pair $\langle s,h \rangle$ of a variable- and a field assignment. We write $H(x)$ for $s(x)$, and
$H(f)$ for $h(f)$.
For a subset of variables $Var \subseteq V$ we denote with $\mathcal{R}_{H}(Var)$ the set of objects reachable from
objects labeled by these variables in $H$ via any of the (functional) transition relations $H(f)$, for $f \in F$. Formally
it is defined as the least fixpoint of the equation
$$\mathcal{R}_H(Var) = \{H(x) \mid x \in Var\} \cup \{H(f)(n) \mid f \in F, n \in \mathcal{R}_H(Var)\}$$
If $Var=V$ we denote the set of reachable objects of $H$ by $\mathcal{R}_{H}$. Further we define
the ``purely local'' part of a heap $H$ as
$\mathcal{R}^\diamond_H= \mathcal{R}_{H}(L \cup C) \setminus \mathcal{R}_H(G)$. Intuitively
$\mathcal{R}^\diamond_H$ contains all objects which are reachable from a local (or cut point) variable,
but not from a global variable.


We denote variable update by $H[x~{:}{=}~n]$, global field update by
$H[f ~{:}{=}~ \varphi]$ where $\varphi:\mathbb{N}_\bot \rightarrow\mathbb{N}_\bot$ is a function such that
$\varphi(\bot)=\bot$, and local field update by $H[f ~{:}{=}~ \varphi[n ~{:}{=}~ m]]$. We use
the standard notation and definition of simultaneous assignments and updates.
A \emph{renaming} $\rho$ of a subset $N \subseteq \mathbb{N}_\bot$ is an injective function
in $\mathbb{N}_\bot \rightarrow \mathbb{N}_\bot$ such that $\rho(n)=n$ for all $n \not \in N$,
and otherwise $\rho(n) \in N$. Clearly it has an inverse, denoted by $\rho^{-1}$. Given
a renaming $\rho$ we define its application on a heap $H$ as $\rho(H)(x) = \rho(H(x))$ and $\rho(H)(f)(n) = \rho(H(f)(\rho^{-1}(n)))$.

A \emph{configuration} is a tuple $\langle H, \Gamma \rangle$ where $H$ is the current heap and
$\Gamma$ is a stack of statements and heaps. The head of a stack is separated from the tail by means of
the right-associative operator $\bullet$, while the empty stack is represented by $\epsilon$.
The current statement to be executed is on the top of the stack. When there are no statements but an heap
on the top of the stack, then a procedure returns, and the state on the stack has to be restored as
current state. A \emph{computation} is a (possibly infinite) sequence
$C_0 \longrightarrow C_1 \longrightarrow \ldots$ of transitions, where $\longrightarrow$
is a relation between configurations which we will now define by cases on the top of the stack.
To this end let $\Gamma$ be a stack of statements and heaps.
Assignments to a variable or to a field update the current heap structure as expected:
\begin{equation*} \label{ruleassign_to_var}
 \langle H, x ~{:}{=}~ y.f \bullet \Gamma \rangle \longrightarrow
 \langle H[x~{:}{=}~H(f)(H(y))], \Gamma \rangle
\end{equation*}
\begin{equation*} \label{ruleassign_to_field}
 \langle H, x.f ~{:}{=}~ y \bullet \Gamma \rangle \longrightarrow
 \langle H[f ~{:}{=}~ H(f)[H(x)~{:}{=}~H(y)]], \Gamma \rangle
\end{equation*}
To model object creation we assume a distinguished global ``system'' variable $oc$ which is
used as a counter, and does not appear in a program. We implicitly assume that $H(oc)\not=\bot$,
for every heap $H$. The semantics of the operator ``new'' is:
\begin{equation*} \label{rulecreate}
 \langle H, x ~{:}{=}~ \text{new} \bullet \Gamma \rangle  \longrightarrow
 \langle H[x,oc ~{:}{=}~ H(oc), H(oc)+1][\bar{f} ~{:}{=}~ \bar{\varphi}], \Gamma \rangle
\end{equation*}
where $\bar{\varphi}$ is the sequence such that $\varphi_i = H(f_i)[H(x) ~{:}{=}~ \bot]$.
Conditional statements are executing depending on the evaluation of the condition:
\begin{equation*}\label{rulecond}
 \frac{H(x)=H(y)}
      {\langle H, [x=y] B \bullet \Gamma \rangle \longrightarrow \langle H, B \bullet \Gamma \rangle}
     ~~~~~~~~~~~~~~~~~~
 \frac{H(x) \neq H(y)}
      {\langle H, [x \neq y] B \bullet \Gamma \rangle \longrightarrow \langle H,B \bullet \Gamma \rangle}
\end{equation*}
Sequential composition adds on top of the stack the next statements to be executed, while
nondeterministic choice selects just one of the two statements.
\begin{equation*}\label{rulecontrol}
 \langle H, B_1; B_2 \bullet \Gamma \rangle \longrightarrow
 \langle H, B_1 \bullet B_2 \bullet \Gamma \rangle
 ~~~~~~~~~~~~~~~~~~
 \langle H, B_1 + B_2 \bullet \Gamma \rangle \longrightarrow \langle H, B_i \bullet\Gamma \rangle ~~~~~~~ i \in \{0,1\}
\end{equation*}
Finally, procedure call and return are modeled as follows:
\begin{equation*} \label{ruleproccall}
 \langle H, p \bullet \Gamma \rangle \longrightarrow
 \langle H[\bar{l} ~{:}{=}~ \bar{\bot}], B \bullet H \bullet \Gamma \rangle
 ~~~~~~~~~~~~~~~~~~
 \langle H, H' \bullet \Gamma \rangle \longrightarrow
 \langle H[\bar{l} ~{:}{=}~ H'(\bar{l})], \Gamma \rangle
\end{equation*}
where $H'(\bar{l})$ denotes the pointwise application of $H'$ to the local variables $\bar{l}$, and $B$
is the body of the procedure $p$. Recall that for technical convenience there is a single sequence of local variables $\bar{l}$ shared
by the procedures.

This section is concluded with the notion of \emph{properness}, which is a formalization of some operational properties of configurations
appearing during the execution of programs.
Suppose $H$ and $H'$ are heaps which appear in the stack, meaning they are pending heaps from
a procedure call, and $H$ appears higher up in the stack than $H'$ (so $H'$ was put on the stack before $H$). Then (1) the structure
of the purely local part of $H'$ is preserved in $H$, since it could not have been accessed in between. Moreover (2) if
some object is reachable in both $H$ and $H'$, then it must be reachable from a global variable in $H'$. Both conditions
intuitively capture the property that reachable objects remain reachable in a recursive call only if they
were already reachable from global variables.
\begin{definition}
The set of \emph{proper} stacks is defined inductively as follows:
 \begin{itemize}
  \item the empty stack is proper
  \item if $\Gamma$ is proper then $B \bullet \Gamma$ is proper for statements $B$
  \item if $\Gamma$ is proper and $H$ is a heap such that for every $H'$ occurring in $\Gamma$ the following holds:
      \begin{itemize}
	\item for all $f \in F$, $n \in \mathcal{R}^\diamond_{H'}$: $H(f)(n) = H'(f)(n)$
	\item $\mathcal{R}_{H'} \cap \mathcal{R}_{H} \subseteq \mathcal{R}_{H'}(G)$
      \end{itemize}
      then $H \bullet \Gamma$ is proper.
 \end{itemize}
 A configuration $\langle H, \Gamma \rangle$ is proper if $H \bullet \Gamma$ is a proper stack.
\end{definition}

For example every configuration $\langle H,p_0 \rangle$ is proper. Further, all transition steps preserve
proper configurations. Thus every configuration in a computation starting from a proper one is proper.

\section{Improving the semantics}\label{sec-impr}

The semantics introduced may generate, for a given program, a transition system with infinitely many configurations.
This is not only because of the unbounded stack size, but, more problematically, also because each time a new object
is allocated a new natural number is used. Thus, the number of heaps needed is also unbounded.
Consider for example the Shylock program
consisting of a single procedure $p$ with as body the statement
\[
x ~{:}{=}~ \text{new}; \; p
\]
where $x$ is a local variable.
Each time the statement $x~{:}{=}~$new is executed, a new natural number
is assigned to $x$.
This has the unfortunate consequence that infinitely many heaps are needed, and thus
the usual model checking techniques for recursive
systems~\cite{ES01,Sch02b}
cannot be guaranteed to terminate.
In this section we introduce an abstract semantics for Shylock, based on \emph{reuse} of natural numbers for objects.
Identities of objects that
are not in use in the \emph{current} heap will be reused instead of using a new
identity each time a new object is allocated. More concretely, when creating
a new object we will choose the \emph{minimal} unused identity available, as formally
expressed by the following rule:
\begin{equation*}
 \langle H, x ~{:}{=}~ \text{new} \bullet \Gamma \rangle  \longrightarrow \langle  H[x~{:}{=}~n][\bar{f} ~{:}{=}~ \bar{\varphi}], \Gamma \rangle
 \end{equation*}
where $n = \min(\mathbb{N} \setminus\mathcal{R}_H)$, and, for each $i$, $\varphi_i = H(f_i)[n ~{:}{=}~ \bot]$.
The intuition here is that numbers are no longer concrete object identities, but instead they represent equivalence
classes. However, this adapted rule may introduce name clashes with objects in the local state
pending on the stack. We illustrate this problem and our solution with an example. Consider the following
heap:
\begin{equation*}
\xymatrix{
*++[o][F-]{l:0} \ar[r]^f & *++[o][F-]{g:1} \ar[r]^f & *++[o][F-]{nil:\bot} \ar@(ru,ul)[]_f
}
\end{equation*}
Here $x:n$ represents the identity $n$ to which the variable $x$ refers (so technically
the figure represents a heap $H$ for which $H(l)=0$, $H(f)(0) = 1$, etc.). Further
$l$ is a local variable, $g$ is a global variable, and $f$ is the single field.
Let us consider first the execution of a call to a procedure $p~::~g ~{:}{=}~ \text{new}$.
Starting from the above heap, on the call a copy is placed onto the stack and the
local variable $l$ is initialized to $\bot$, so the procedure $p$ is executed on the following heap:
\begin{equation*}
 \xymatrix{
  *++[o][F-]{g:1} \ar[r]^f & *++[o][F-]{l,nil:\bot} \ar@(ru,ul)[]_f
}
\end{equation*}
When executing $g ~{:}{=}~ \text{new}$ we take for $g$ the minimal object identity unreachable
from the current variables, which is $0$. Then, on procedure return, we see that there
is a name clash: both $g$ and $l$ point to the object with identity $0$, while they
should obviously not be identified. A solution is to \emph{rename} the object $n$ to which $g$ points, i.e.,
to make $g$ point to an identity $m$ which is used neither by the current nor the caller's stored heap, and
updating the fields of this new object according to the fields of $n$. Then we can just take the union of the global
part of the new heap, and the local part of the stored heap. For example we could rename, in the current heap, the
object 1 to 2, which is free, take the union with the (local part of the) stored heap of the caller, and combine
the two heaps as follows:
\begin{equation*}
\xymatrix{
*++[o][F-]{l:0} \ar[r]^f & *++[o][F-]{1} \ar[r]^f & *++[o][F-]{nil:\bot} \ar@(ru,ul)[]_f & *++[o][F-]{g:2} \ar[l]_f
}
\end{equation*}
However, consider now the execution of a procedure $p'~::~g ~{:}{=}~ \text{new}; \; g ~{:}{=}~ \text{new}$,
starting from the same heap as before (the first figure above). After executing the first object creation statement in the procedure,
$g$ again points to 0. But then after the second time we execute $g ~{:}{=}~ \text{new}$, $g$ is assigned the minimal
index available, which is 1 at that point. Thus on procedure return, the heap is exactly the same as in the beginning
of the procedure execution. So two object creations are in this case indistinguishable
from no creation at all. On the procedure return, when combining the current heap with the stored heap
it is thus not clear whether $g$ should keep pointing to 1 (when no object creation statements were executed), or if it should be renamed to a new identity
separate from the others (when two object creation statements were executed).


Our solution to this problem is a non-trivial extension of the semantics of procedure call and -return
based on the identification of so-called \emph{cut points}, which
are the object identities at the ``edge'' of the global and the local part of the heap,
representing exactly the point where the local part ``enters'' the global part. On
a procedure call, these cut points are identified, and we assign their values to a set of distinguished
\emph{cut point variables}, in the heap of the \emph{callee}. Then, on procedure return,
the cut point variables ``connect'' the current heap with the stored one, giving us
precisely the information about how to combine the two. Returning to our example, consider the
following heap:
\[
\xymatrix{
*++[o][F-]{g,c:1} \ar[r]^f & *++[o][F-]{l,nil:\bot} \ar@(ru,ul)[]_f &
}
\]
This heap represents the initial heap of the callee, extended with the only cut point
of the caller heap, in the form of the cut point variable $c$. Now if we execute
$g~{:}{=}~ \text{new}; g~{:}{=}~ \text{new}$, the global variable $g$ is not assigned the identity
$1$ since it is already in use. So now on procedure return we can distinguish between the case
that $g$ was newly created (in which case it will have a new identity), and the case that it was not (in which
case it will have the same identity as before).

Formally, for a given heap $H$, the set $CP_H$ of cut points
is defined as follows: $$\mathcal{R}_H (G) \cap (H(L \cup C) \cup F(\mathcal{R}_H^\diamond))$$
where $F(N) = \{H(f)(n) \mid n \in N, f \in F\}$.
Further, $H(V)$ means applying $H$ point-wise to $V$, hence $H(L \cup C) =\{ H(v) \mid v \in L\cup C \}$.
Note that the definition involves the cut point
\emph{variables}; recall from the above discussion that these variables represent the
cut points of the \emph{previous} heap.
Further recall that $\mathcal{R}_H(G)$ is the global part of the heap, while $\mathcal{R}_H^\diamond$ is the ``purely local'' part of
the heap. Intuitively, $F(\mathcal{R}_H^\diamond)$ represents the objects which are pointed to by a field from an object which is purely
local. Further $H(L \cup C) \cap \mathcal{R}_H(G)$ is the set containing objects pointed to \emph{directly} by the local variables,
which are also reachable from global variables. On the other hand, $F(\mathcal{R}_H^\diamond) \cap \mathcal{R}_H(G)$ contains objects adjacent to the
purely local (reachable) nodes (where the node adjacency is provided by the field transitions).

Now the procedure call of the improved semantics is modeled by the following rule.
\[
\langle H, p_i \bullet \Gamma \rangle  \longrightarrow \langle H[\bar{v} ~{:}{=}~ \bar{\bot}][\bar{c} ~{:}{=}~ \bar{n}], B_i \bullet H \bullet \Gamma \rangle  \,,
\]
where $\bar{v}$ is the sequence of local variables and cut point variables $c$ for which $H(c) \neq \bot$. Further
$\bar{c}$ is a sequence of cut point variables of the same length as the sequence $\bar{n}$ of cut points $CP_H$.
Note that, given a heap $H_c$, if on top of the stack  we have a heap $H_l$,  by the way we modeled
the  procedure call, the cut points in the stacked heap $H_l$ correspond exactly to the cut point variables
in the current heap $H_c$.

We proceed to discuss the construction of the return heap, say $H_r$. We first rename all objects of the
purely local part of $H_l$ which conflict with $H_c$, meaning
that they are also reachable from global variables in $H_c$. When this is done, we can just copy all the
local variables directly to $H_c$ and update the fields of the purely local part of $H_l$ in $H_c$.
In order to formalize this process we define the set of \emph{name clashes} $N$ of $H_c$ and $H_l$: $$N = \mathcal{R}^\diamond_{H_l} \cap \mathcal{R}_{H_c}(G)$$
Remember that $\mathcal{R}^\diamond_{H_l}$ only contains the objects which are
reachable from a local variable (or from a previous cut point variable), and are
not reachable from any global variable. Now the return rule is formalized as follows:
\[
\langle  H_c, H_l  \bullet \Gamma\rangle  \longrightarrow \langle  \theta \circ \rho(H_c), \Gamma \rangle
\]
where
\begin{itemize}
 \item $\rho$ is a renaming, monotonic on $N$, such that $\rho(n) \in \mathbb{N} \setminus \mathcal{R}_{H_c[\bar{l} ~{:}{=}~ \bar{\bot}]}$
 if $n \in N$, and $\rho(n)=n$ otherwise,  and $\rho$ is minimal w.r.t pointwise comparison between renaming functions
 \item $\theta$ resets the purely local part:
  $$\theta(H)= H[\bar{l},\bar{c} ~{:}{=}~ H_l(\bar{l}),H_l(\bar{c})][\bar{f}~{:}{=}~\bar{\varphi}]$$
where $\bar{c}$ is the sequence of cut point variables $c$ for which $H_l(c) \neq \bot$, and
$\bar{\varphi}$ is the sequence defined, for all $n \in \mathbb{N}_\bot$, as follows:
$$\varphi_i(n) ~{:}{=}~
\begin{cases}
 H_l(f_i)(n) & \text{ if } n \in \mathcal{R}_{H_l}^\diamond \\
 H_c(f_i)(n) & \text{ otherwise }
\end{cases}
$$
\end{itemize}

\paragraph{Correctness}\label{sec-correctness}

We provide a proof of the equivalence between the concrete semantics of Shylock, and the abstract semantics defined
in the previous section. To this end
we adapt the concrete semantics to take into account the initialization and restoration of the cut point variables,
similar to the abstract semantics.
Note that this does not affect the behaviour of programs, as they are assumed not to contain cut point variables.
First we need the basic notion of heap isomorphism:
\begin{definition}
Two heaps $H$ and $H'$ are \emph{isomorphic}, denoted $H \sim H'$, if there exists a function $\alpha: \mathcal{R}_H \rightarrow \mathcal{R}_{H'}$
such that
\begin{itemize}
 \item $\alpha$ is a bijection.
 \item For each $x \in V$: $\alpha(H(x)) = H'(x)$.
 \item For each $f \in F$ and $n \in \mathcal{R}_H$: $\alpha(H(f)(n)) = H'(f)(\alpha(n))$.
\end{itemize}
\end{definition}
Note that since fields are deterministic, such a function $\alpha$, if it exists, is unique.
In order to proceed, we introduce the important notion
of \emph{cut point identification}. Recall that
on a procedure call, the new heap $H_c$ represents
in its cut point \emph{variables} the cut points of the
caller heap $H_l$. Suppose now we have other heaps $H_c'$ and
$H_l'$ such that $H_c \sim H_c'$ and $H_l \sim H_l'$. Note
that cut points are preserved by isomorphisms. Cut point
\emph{identification} now formalizes the representation of cut points
of $H_l$ and $H_l'$ in $H_c$ and $H_c'$ respectively, using for
each cut point in $H_l$ and $H_l'$ the \emph{same} variable to represent it in $H_c$ and $H_c'$.
\begin{definition}[Cut point identification]
Let $H_c,H_l,H_c',H_l'$ be heaps such that
$H_c \sim_{\alpha_c} H_c'$, $H_l \sim_{\alpha_l} H_l'$. Let
$\{n_1, \ldots, n_k\} = CP_{H_l}$ be the cut points of $H_l$.
We define
$$(H_c,H_l) \bowtie (H_c',H_l')$$
iff there exists a sequence of cut point variables $c_1, \ldots, c_k$ such that for all $i \leq k$:
$$H_c(c_i) = n_i \text{ and }  H_c'(c_i) = \alpha_l(n_i)$$
\end{definition}
From this definition we immediately deduce that the two isomorphisms agree on cut points:
\begin{corollary}\label{corr-cpi}
 If $(H_c,H_l) \bowtie (H_c',H_l')$ then for all $n \in CP_{H_l}$:
$\alpha_l(n)=\alpha_c(n)$.
\end{corollary}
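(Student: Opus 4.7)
The plan is to unfold the definitions of cut point identification and heap isomorphism, and observe that the conclusion then drops out by a one-line chain of equalities; there is essentially no real obstacle here.

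First I would fix an arbitrary $n \in CP_{H_l}$ and enumerate it as $n = n_i$ for some $i \leq k$, where $\{n_1,\ldots,n_k\} = CP_{H_l}$ is the enumeration supplied by the hypothesis $(H_c,H_l) \bowtie (H_c',H_l')$. By that hypothesis there is a corresponding cut point variable $c_i$ satisfying $H_c(c_i) = n_i$ and $H_c'(c_i) = \alpha_l(n_i)$. Since $c_i$ is a variable in $V$, the isomorphism clause of $H_c \sim_{\alpha_c} H_c'$ instantiated at $c_i$ yields $\alpha_c(H_c(c_i)) = H_c'(c_i)$, i.e.\ $\alpha_c(n_i) = H_c'(c_i)$.

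Chaining these two equations gives $\alpha_c(n_i) = H_c'(c_i) = \alpha_l(n_i)$, which is exactly what we had to show for this $n$. Since $n \in CP_{H_l}$ was arbitrary, the corollary follows.

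The only minor point worth noting is that the claim implicitly requires $n_i \in \mathcal{R}_{H_c}$ so that $\alpha_c(n_i)$ is well-defined; this is immediate because $H_c(c_i) = n_i$ makes $n_i$ reachable in $H_c$ via the variable $c_i$. Apart from this sanity check, the argument is a direct calculation from the two definitions and requires no auxiliary machinery.
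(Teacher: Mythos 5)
Your proof is correct and follows essentially the same one-line chain of equalities $\alpha_l(n_i) = H_c'(c_i) = \alpha_c(H_c(c_i)) = \alpha_c(n_i)$ as the paper's own proof. The additional remark that $n_i \in \mathcal{R}_{H_c}$ (so $\alpha_c(n_i)$ is well-defined) is a reasonable sanity check the paper leaves implicit.
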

\begin{proof}
 For all $n_i \in CP_{H_l}$ we have $\alpha_l(n_i) = H_c'(c_i) = \alpha_c(H_c(c_i)) = \alpha_c(n_i)$.
\end{proof}
Now we are ready to introduce a strong notion of equivalence, based on heap isomorphism,
which also takes along the main operational properties characterizing
configurations appearing in computations.
\begin{definition}
Given stacks $\Gamma, \Gamma'$ we define $\Gamma \sim \Gamma'$ inductively as follows:
\begin{itemize}
 \item if $\Gamma$ and $\Gamma'$ are both empty then $\Gamma \sim \Gamma'$
 \item if $\Gamma \sim \Gamma'$ then $B \bullet \Gamma \sim B \bullet \Gamma'$
 \item if \begin{itemize}
  \item $\Gamma \sim \Gamma'$, $H \sim H'$
  \item $(H,\nu(\Gamma)) \bowtie (H',\nu(\Gamma'))$
  \item $H \bullet \Gamma$ is proper
 \end{itemize}
  then $H \bullet \Gamma \sim H' \bullet \Gamma'$
\end{itemize}
where $\nu(\Gamma)$ extracts from the stack the top heap.
Now for configurations we define $\langle H, \Gamma \rangle \sim \langle H', \Gamma' \rangle$ iff
$H \bullet \Gamma \sim H' \bullet \Gamma'$.
\end{definition}
The following lemma states how the current heap and the stacked heap are combined on a procedure
return in the concrete semantics. More precisely it expresses that any identity which \emph{becomes}
reachable right after a procedure returns, is in the purely local part of the heap of the caller procedure.
\begin{lemma}\label{lm-return-localpart}
Suppose $\langle H_c, H_l \bullet \Gamma \rangle \sim \langle H_c', H_l' \bullet \Gamma' \rangle$. Let
$H_r = H_c[\bar{l} := H_l(\bar{l})]$. Then for all $n \in \mathcal{R}_{H_r}$:
if $n \not \in \mathcal{R}_{H_c}(G \cup C)$ then $n \in \mathcal{R}_{H_l}^\diamond$.
\end{lemma}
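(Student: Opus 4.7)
My plan is to fix an arbitrary $n \in \mathcal{R}_{H_r}$ with $n \notin \mathcal{R}_{H_c}(G \cup C)$ and reason about the witnessing path from some variable to $n$ in $H_r$. The first observation is that $H_r$ agrees with $H_c$ on all fields and on all variables except those in $L$, so $\mathcal{R}_{H_r}(G \cup C) = \mathcal{R}_{H_c}(G \cup C)$. Hence the assumption $n \notin \mathcal{R}_{H_c}(G \cup C)$ forces the path to start from a local variable $l \in L$, at the node $n_0 = H_r(l) = H_l(l) \in \mathcal{R}_{H_l}(L)$.

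I then intend to show, by induction along the path $n_0, n_1, \ldots, n_k = n$ (with $n_{i+1} = H_r(f_i)(n_i) = H_c(f_i)(n_i)$), that every $n_i$ lies in $\mathcal{R}_{H_l}^\diamond$. The inductive step uses properness of $H_c \bullet H_l \bullet \Gamma$: by the first properness clause, $n_i \in \mathcal{R}_{H_l}^\diamond$ implies $H_c(f_i)(n_i) = H_l(f_i)(n_i)$, so $n_{i+1} \in \mathcal{R}_{H_l}(L \cup C)$ automatically. What remains is to rule out $n_{i+1} \in \mathcal{R}_{H_l}(G)$ (and analogously for the base case $n_0$).

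The key step — and the place I expect the subtlety to lie — is this exclusion, which rests on cut point identification. If $n_{i+1}$ were in $\mathcal{R}_{H_l}(G)$, then since $n_{i+1} \in F(\mathcal{R}_{H_l}^\diamond)$ (base case: $n_0 \in H_l(L)$), it would be a cut point of $H_l$ by definition of $CP_{H_l}$. But $\langle H_c, H_l \bullet \Gamma \rangle \sim \langle H_c', H_l' \bullet \Gamma' \rangle$ unfolds (via $\nu$) into $(H_c, H_l) \bowtie (H_c', H_l')$, so each cut point of $H_l$ is witnessed by a cut point variable $c_j$ with $H_c(c_j) = n_{i+1}$. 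Therefore $n_{i+1} \in \mathcal{R}_{H_c}(C)$, which is closed under $H_c$-field transitions; walking the remainder of the path then lands $n$ inside $\mathcal{R}_{H_c}(G \cup C)$, contradicting our hypothesis. The base case $n_0$ is handled by exactly the same contradiction, using that $n_0 \in H_l(L)$ so $n_0 \in \mathcal{R}_{H_l}(G)$ would also make it a cut point.

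Taking $i = k$ in the induction then gives $n \in \mathcal{R}_{H_l}^\diamond$, as required. The routine content is the two verifications that (i) $\mathcal{R}_{H_c}(G \cup C)$ is closed under $H_c$-fields and (ii) that $n_{i+1} = H_l(f_i)(n_i) \in F(\mathcal{R}_{H_l}^\diamond)$; the conceptual core is the use of the cut point variables provided by $\bowtie$ to bridge information about $H_l$'s cut points into reachability statements about $H_c$.
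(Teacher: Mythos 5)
Your proposal is correct and follows essentially the same route as the paper's proof: induction along the access path from a local variable, using properness to transfer field values from $H_c$ to $H_l$ on the purely local part, and cut point identification to rule out entering $\mathcal{R}_{H_l}(G)$. You are in fact slightly more explicit than the paper about why the hypothesis $n \notin \mathcal{R}_{H_c}(G \cup C)$ propagates to all intermediate nodes of the path (closure of $\mathcal{R}_{H_c}(G \cup C)$ under field transitions), a point the paper's proof leaves implicit.
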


\begin{proof}
Let $H_c, H_l, H_r$ be as above.
Let $n \in \mathcal{R}_{H_r}$ and assume $n \not \in \mathcal{R}_{H_c}(G \cup C)$, so $n$
is reachable in $H_r$ from a local variable.
We prove by induction that any path in $H_r$ reaching such an $n$
is reflected in the same path in $H_l$ which lies entirely in its purely local part $\mathcal{R}_{H_l}^\diamond$.

Suppose first that $n=H_r(l)$ for some local variable $l$. Then $n=H_r(l)=H_c[\bar{l} := H_l(\bar{l})](l) = H_l(l)$.
Now by assumption (that is in the relation $\sim$), $H_c$ has cut point \emph{variables} precisely on the cut \emph{points} of $H_l$. We may then
conclude that $n$ is not reachable from a global variable in $H_l$; otherwise it would, by definition (of cut points) be
on a cut point of $H_l$, and consequently on a cut point variable in $H_c$ which contradicts our assumption on
the reachability of $n$. Thus $n \in \mathcal{R}_{H_l}^\diamond$.


Now let $n=H_r(f_i) \circ \ldots \circ H_r(f_1)(H_r(l))$
such that $n \not \in \mathcal{R}_{H_c}(G \cup C)$,
$n=H_l(f_i) \circ \ldots \circ H_l(f_1)(H_l(l))$ and $n \in \mathcal{R}_{H_l}^\diamond$.
Suppose $H_r(f)(n) \not \in \mathcal{R}_{H_r}(G \cup C)$ for some field $f$.
Since $n \in \mathcal{R}_{H_l}^\diamond$ and $H_c \bullet H_l$ is proper we have
$H_l(f)(n) = H_r(f)(n)$. Now $H_r(f)(n) \neq H_c(c)$
for all cut point variables $c$; otherwise $n$ would be reachable from such an $H_c(c)$ which would be a contradiction with
our assumption. But then by the cut point identification of $H_c$ and $H_l$, $H_l(f)(n)$ is not on
a cut point of $H_l$, which implies that the global state has not been entered yet, i.e.,
$H_l(f)(n) \in \mathcal{R}_{H_l}^\diamond$ as desired.
\end{proof}
We are now ready for the main theorem of this section, stating that the
concrete and the abstract semantics are equivalent.
\begin{theorem}[Bisimulation]\label{thm-bisimulation}
 Let $C_1$ and $C_2$ be configurations such that $C_1 \sim C_2$. Denote with $\rightarrow_c$ and $\rightarrow_a$
 the transition relations corresponding to the concrete and the abstract semantics, respectively. If
  $C_1 \rightarrow_c C_1'$ then there exists a configuration $C_2'$ such that $C_2 \rightarrow_a C_2'$ and $C_1' \sim C_2'$, and vice versa.
\end{theorem}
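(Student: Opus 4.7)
}

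The natural approach is by case analysis on the transition rule applied to $C_1$ (for the forward direction) and on the rule applied to $C_2$ (for the converse), producing in each case a witness configuration on the other side and then verifying the three clauses of $\sim$ (heap isomorphism, cut point identification, properness). The straightforward cases are the ones that act only on the top of the stack without touching the heap structurally: variable assignment, field assignment, the two conditional rules, sequential composition, and nondeterministic choice. For these, concrete and abstract semantics impose exactly the same update, so I would simply transport the existing isomorphism $\alpha_c$ through the update, observing that isomorphism preserves the truth of $H(x)=H(y)$ and that $\alpha_c$ remains a bijection after a variable or functional update.

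The first genuinely new case is object creation. In the concrete semantics the new object receives the counter value $H(oc)$, in the abstract semantics it receives $\min(\mathbb{N}\setminus\mathcal{R}_H)$. The key observation is that both identities are \emph{unreachable} in their respective heaps before the step, so I would extend the isomorphism $\alpha_c$ by mapping the concrete fresh number to the abstract fresh number. The extended map is still a bijection on the new reachable set, the fields of the new object are $\bot$ on both sides (hence trivially preserved), and preservation of properness for the (unchanged) tail of the stack is immediate. The procedure call case is similarly direct: both semantics initialise local variables to $\bot$ and set cut point variables to $CP_H$, cut points are preserved by $\alpha_c$, so choosing the same enumeration $c_1,\dots,c_k$ of cut point variables on both sides establishes $(H_c^{\text{new}},H)\bowtie(H_c'^{\text{new}},H')$. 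Properness of the pushed configuration follows because right after the call the purely local part of the new top heap is empty.

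The main obstacle is the procedure return. Here the concrete semantics just restores the local (and cut point) variables, while the abstract semantics applies a renaming $\rho$ to resolve name clashes and then reassembles the heap via $\theta$. Given $\langle H_c, H_l\bullet\Gamma\rangle \sim \langle H_c', H_l'\bullet\Gamma'\rangle$, I would construct the isomorphism $\alpha_r$ for the returning configurations by gluing $\alpha_c$ on the part of the heap that survives the return (the image of the global reach of $H_c$ together with the cut point variables), with $\alpha_l$ on the purely local part of $H_l$ that re-enters through the restored local and cut point variables. Lemma~\ref{lm-return-localpart} is what guarantees that these are exactly the two regions that make up $\mathcal{R}_{H_r}$, so the glued map is total and surjective onto $\mathcal{R}_{\theta\circ\rho(H_c')}$. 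Well-definedness on the overlap, namely the cut points of $H_l$, uses Corollary~\ref{corr-cpi}, which tells us $\alpha_l$ and $\alpha_c$ coincide there. The renaming $\rho$ is precisely tailored to move the purely local objects of $H_l$ off of the globally-reachable identities of $H_c$; since the concrete side uses globally fresh counter values, no renaming is needed on the concrete side, but the bijection $\alpha_r$ absorbs this shift. Finally I would reverify cut point identification for $\Gamma,\Gamma'$ (unchanged except possibly that some cut point variables shift, which is handled by $\theta$ copying $H_l$'s cut point variables) and properness of the returned stack, which follows because all heaps deeper in $\Gamma$ already satisfied the properness conditions with respect to $H_l$ and nothing in the purely local part of those deeper heaps has been disturbed. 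The converse direction from $\to_a$ to $\to_c$ is symmetric, using the same witness construction in reverse.
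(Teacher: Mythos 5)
Your proposal is correct and follows essentially the same route as the paper: the easy cases are dispatched by transporting the isomorphism, and for the crucial procedure-return case you build the new isomorphism exactly as the paper does, by gluing $\rho\circ\alpha_c$ on $\mathcal{R}_{H_c}(G\cup C)$ with $\alpha_l$ on the purely local part of $H_l$, using Lemma~\ref{lm-return-localpart} for totality and Corollary~\ref{corr-cpi} for agreement on the cut points. (The paper's written proof in fact only treats the return case, so your additional verification of the remaining rules, cut point identification, and properness is more detailed than, but consistent with, the published argument.)
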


\begin{proof}
 We only discuss the isomorphism of the resulting heaps on procedure return.
Suppose
$$\langle H_c, H_l \bullet \Gamma \rangle \sim \langle H_c', H_l' \bullet \Gamma' \rangle$$
By definition of the concrete and the abstract semantics from these respective configurations the enabled transitions are
$$\langle H_c, H_l \bullet \Gamma \rangle \rightarrow_c \langle H_r, \Gamma \rangle
~~~~~ \text{ and } ~~~~~
\langle H_c', H_l' \bullet \Gamma' \rangle \rightarrow_a \langle H_r', \Gamma' \rangle$$
where $H_r' = \theta \circ \rho (H_c')$. By definition of $\sim$ there are isomorphisms $H_c \sim_{\alpha_c} H_c'$ and $H_l \sim_{\alpha_l} H_l'$.
We explicitly define an isomorphism $\alpha: \mathcal{R}_{H_r} \rightarrow \mathcal{R}_{H_r'}$ as follows:
$$
\alpha(n) =
\begin{cases}
 \rho \circ \alpha_c (n) & \text{ if } n \in \mathcal{R}_{H_c}(G \cup C) \\
 \alpha_l(n) & \text{ otherwise }
\end{cases}
$$
Note that by Lemma~\ref{lm-return-localpart}, if $n \not \in \mathcal{R}_{H_c}(G \cup C)$ then
$n \in \mathcal{R}_{H_l}^\diamond$ so $\alpha$ is well-defined. To see
that $\alpha$ is an isomorphism, intuitively, note that
$\alpha_c$ is an isomorphism on $\mathcal{R}_{H_c}(G \cup C)$ and $\alpha_l$
is an isomorphism on $\mathcal{R}_{H_l}^\diamond$, and
by cut point identification we known from Corollary~\ref{corr-cpi} that $\alpha_c(n)=\alpha_l(n)$ for all $n \in CP_{H_l}$.
\end{proof}

\section{Model checking Shylock programs}\label{sec-mc}

In this section we present a framework for model checking Shylock programs. We
first turn our abstract semantics into a pushdown system, then we introduce
a linear time temporal logic for heap structures, and finally we shortly recall
the actual model checking procedure.

\paragraph{Programs as pushdown systems}
A \emph{pushdown system} can be considered as a pushdown automaton without an input alphabet.
 Formally a pushdown system $\mathcal{P}$ is a triple $(\Delta, \Sigma, \mapsto)$ where
$\Delta$ is a set of \emph{control locations}, $\Sigma$ is a \emph{stack alphabet}, and
$\mapsto$ is a subset of $ \langle\Delta \times \Sigma\rangle \times \langle\Delta \times \Sigma^*\rangle$
representing the set of \emph{rules}. A pushdown system is said to be \emph{finite} when the
above three sets are all finite.

The behaviour of any Shylock program $P = \{p_0~::~B_0,\ldots,p_l~::~B_l\}$ can be represented by a pushdown system
$\mathcal{P}_P= (\Delta, \Sigma, \mapsto)$, where $\Delta$ is the set of all heaps, and
$\Sigma = \Delta \cup cl(P) \cup \{ Z \}$, where $Z$ is an element which does not occur in $\Delta$ and $cl(P)$. Here $cl(P)$ is
the set of all possibly reachable statements in $P$,
and it is defined as the union of all $cl(B_i)$, for all $ 0 \leq i \leq l$,  with $cl(B)$ given
inductively by:
\[
\begin{array}{lcl}
cl(x.f~{:}{=}~y) = \{x.f~{:}{=}~y\}        & \;\;\; &  cl(x~{:}{=}~y.f) = \{x~{:}{=}~y.f\}\\
cl(x~{:}{=}~\text{new}) = \{x~{:}{=}~\text{new}\} & & cl(p) = \{ p \}\\
cl([x=y] B) = \{[x=y] B\} \cup cl(B)  & & cl([x \neq y] B) = \{[x \neq y]  B\} \cup cl(B)\\
cl(B_1 + B_2) = cl(B_1) \cup cl(B_2) \cup \{B_1+ B_2\}  & &  cl(B_1;B_2) = cl(B_1) \cup cl(B_2)
\end{array}
\]
The rules of the pushdown system are specified using the \emph{abstract semantics} as follows:
\[
\langle H,\gamma \rangle \mapsto \langle H',w \rangle
~~~~~ \mbox{iff} ~~~~~ \langle H, \gamma\bullet\Gamma \rangle \longrightarrow \langle H', w\bullet\Gamma \rangle
\]
where $H$ ranges over heaps. Further we add rules $\langle H, \gamma \rangle \mapsto \langle H, Z\rangle$ for any configuration $\langle H, \gamma\rangle$ which does not have outgoing transitions to complete with stuttering steps terminating computations starting from $\langle H_0,p_0 \bullet Z \rangle$. Because there are infinitely many heaps, the
pushdown system constructed above will in general be infinite. Consequently
existing model checking techniques can not be applied.
In order to allow model checking we consider a subclass of programs. First, we need the following
definition:
\begin{definition}
 A heap $H$ is $k$-bounded if $|\mathcal{R}_H(G \cup L)| \leq k$. A computation
  $\langle H_0, \Gamma_0 \rangle \rightarrow \langle H_1, \Gamma_1 \rangle \rightarrow \ldots$
 (where the transition steps are according to the abstract semantics)
  is $k$-bounded if $|\mathcal{R}_{H_i}|$ is $k$-bounded for all $i$.
 A program $P$ with main procedure $p_0$ is $k$-bounded if every computation
  $\langle H, p_0 \rangle \rightarrow \ldots$ is $k$-bounded.
\end{definition}
\begin{example}
As an example of a program which is bounded in this sense, recall the program with
a single procedure defined as $p~::~x~{:}{=}~\text{new};~p$, where $x$ is a local variable. Indeed only one object
identity is needed to represent the object to which $x$ refers, so this program
is $1$-bounded. Nevertheless, since $x$ is local, during the execution of the program an unbounded number
of objects are stored on the stack.

For another example, recall the program from Example~\ref{ex-file} which opens and closes a file. This
program is $k$-bounded iff the visible heap during execution of the procedure $q(x)$ (with $x$ a fresh object) is $k$-bounded.
\qed
\end{example}

Now if we restrict to $k$-bounded programs, the abstract semantics,
because of its reuse of object identities, allows us to represent
the behaviour of a program as a pushdown system as above, but
using as control states only $k$-bounded heaps. By Theorem~\ref{thm-bisimulation}, we then have precise abstractions of
$k$-bounded programs as \emph{finite} pushdown systems. More precisely,
given a $k$-bounded program $P$ we define the pushdown system $k$-${\cal P}_P=(\Delta_k,\Sigma_k,\mapsto_k)$
obtained as a restriction from $\mathcal{P}_P$ as follows.
First, $\Delta_k = \{ H \mid | \mathcal{R}_H(G \cup L)|\leq k \}\cup\{ \top \}$ is the subset of all
$k$-bounded heaps. The stack alphabet $\Sigma_k$ is given by $\Delta_k \cup {\it cl}(P) \cup \{ Z \}$, and
the relation $\mapsto_k$ is the restriction of $\mapsto$ to $\Delta_k$ together with the
two out-of-bound rules below
\[
\frac{\langle H,\gamma \rangle \mapsto \langle H',w \rangle  \;\;\; | \mathcal{R}_{H'}| > k}
     {\langle H,\gamma\rangle \mapsto_k \langle \top ,\gamma\rangle}
\;\;\; \mbox{and} \;\;\;
\langle\top,\gamma\rangle \mapsto_k \langle\top,\gamma\rangle
\]
Note that in fact for any program $P$, $k$-$\mathcal{P}_P$ is a finite pushdown system. However it is a \emph{precise}
abstraction of $P$ only if $P$ is $k$-bounded.

\paragraph{Specifications in $L^iT_t{L}^e$}

In order to do a precise pointer analysis of Shylock programs, we introduce a
linear time temporal logic (${L}^i{T}_t{L}^e$) for describing the evolution
of the heap structure.
We first introduce the language of the properties satisfied by a given heap, which will
form the atomic properties of the linear temporal logic.
We do so by the introduction of expressions
of the Kleene algebra with tests~\cite{Kozen:1997:KAT:256167.256195} over fields and
variables. More precisely, let {\em Rite} be the smallest set defined by the following grammar:
\[
r~{:}{:}{=}~ \varepsilon \mid x\mid \lnot x \mid f\mid r.r\mid r+r\mid r^*
\]
where $x$ ranges over variable names (to be used as tests) and $f$ over field names
(to be used as actions). The regular expressions introduced by {\em Rite} are similar to the heap patterns used in matching logic~\cite{rosu-stefanescu-2011-tr} and separation logic~\cite{Reynolds2002}.
We define a transition relation $n \xrightarrow{r}_H m$
between objects of a heap $H$ as the least relation such that
\[
\begin{array}{ll}
n \xrightarrow{\varepsilon}_H n \\
n \xrightarrow{x}_H n        & \mbox{if $H(x)=n$}\\
n \xrightarrow{\neg x}_H n   & \mbox{if $H(x) \not= n$}\\
n \xrightarrow{f}_H m        & \mbox{if $H(f)(n)=m$}\\
n \xrightarrow{r_1 + r_2}_H m & \mbox{if $n \xrightarrow{r_1} m$ or $n \xrightarrow{r_2} m$} \\
n \xrightarrow{r_1.r_2}_H m & \mbox{if exists an object $n'$ such that $n \xrightarrow{r_1}_H n'$ and $n' \xrightarrow{r_2}_H m$}\\
n \xrightarrow{r^*}_H m     & \mbox{if either $n = m$ or there exists an object $n'$ such that $n \xrightarrow{r}_H n'$ and $n' \xrightarrow{r^*}_H m$}
\end{array}
\]
Further we introduce the following \emph{modal} interpretation of regular expressions:
$$
\mbox{\rm $H\models r$ if and only if for each reachable object $n\in \mathcal{R}_H$ there exists $m$ such that $n\xrightarrow{r}_H m$.}
$$
(Note that this coincides with the truth definition of $H \models \langle r \rangle {\it true}$ in dynamic logic.)
For instance, the regular expression
$
\it first.next^*.last + \neg first
$
is satisfied by a heap $H$ if and only if  the object referred to by the variable {\it first}
is linked via a chain of fields {\it next} with the object referred to  by the variable {\it last}, or
$\it first$ is not allocated.

Our ${L}^i{T}_t{L}^e$ formulas are built according to the following grammar:
$$\phi\;{:}{:}{=}\; true \mid r \mid \neg \phi \mid \phi_1 \wedge \phi_2 \mid {\rm X} \phi \mid \phi_1 {\rm U} \phi_2$$
where $r$ ranges over ${\it Rite}$. Other propositional connectives $\vee$, $\rightarrow$ are defined
in terms of $\wedge$ and $\neg$. Further we define ${\rm F} \phi = true \, {\rm U}\,\phi$ and ${\rm G} \phi= \neg {\rm F} (\neg \phi)$.
We define $At(\phi)$ to be
the set of atomic propositions $r \in {\it Rite}$ which appear in $\phi$. Clearly $At(\phi$) is always a
finite set.

For a set $X$ we denote its powerset by $2^X$.
With $(2^{At(\phi)})^{\omega} = \{w_0 w_1 w_2 \ldots | w_i \in 2^{At(\phi)} \text{ for all } i \geq 0\}$ we denote the set of
infinite words over sets of expressions in $At(\Phi)$. Given such an infinite word $w = w_0 w_1 w_2 \ldots$ we denote
with $w_i$ the $i$-th element, and with $w[i \ldots]$ the subsequence $w_i w_{i+1} \ldots$ starting from the $i$-th element of $w$.
For a ${L}^i{T}_t{L}^e$ formula $\phi$ and an infinite word $w \in (2^{At(\phi)})^{\omega}$
we denote that $w$ satisfies $\phi$ by $w \models \phi$. This satisfaction relation is defined inductively
on the structure of $\phi$ according to the standard semantics of LTL~\cite{BK08}:
$$
\begin{array}{ll}
 w \models true \\
 w \models r & \text{ iff } r \in w \\
 w \models \phi_1 \wedge \phi_2 & \text{ iff } w \models \phi_1 \text{ and } w \models \phi_2 \\
 w \models \neg \phi & \text{ iff } w \not \models \phi \\
 w \models X \phi & \text{ iff } w[1 \ldots] \models \phi \\
 w \models \phi_1 U \phi_2 & \text{ iff } \exists j \geq 0. w[j \ldots] \models \phi_2 \text{ and } w[i \ldots] \models \phi_1 \text{ for all } 0 \leq i < j
\end{array}
$$
Let $\pi$ be an infinite sequence of $k$-bounded heaps for some $k$, i.e., $\pi \in \{H_0  H_1 H_2 \ldots | H_i \in \Delta_k \text{ for all } i \geq 0\}$.
Intuitively, $\pi$ represents a trace of heaps which we encounter during a particular computation of a $k$-bounded program.
We say that $\pi \models \phi$ if and only if there exists a sequence $w \in (2^{At(\phi)})^{\omega}$ such that $w \models \phi$ and for all $i \geq 0$:
$$\pi_i \models r \text{ for all } r \in w_i$$
Finally the above relation $\models$ is pointwise extended to sets of infinite sequences of $k$-bounded heaps.



\paragraph{Model checking}Recall that a B\"uchi automaton $\mathcal{B} = (\nabla, A, \leadsto, Q_0, F)$ consists of a
finite set of states $\nabla$, an input alphabet $A$, a transition relation
$\leadsto \subseteq \nabla \times A \to \nabla$, a set of initial states $Q_0$ and a
set of final states $F$. The language $\mathcal{L}(\mathcal{B})$ accepted by
$\mathcal{B}$ is the set of all infinite words $w$ over $A$ such that there is an infinite
path via $\leadsto$ labeled by $w$, starting from a state $q_0 \in Q_0$, and visiting
an accepting state in $F$ infinitely often. Given a ${L}^i{T}_t{L}^e$ formula $\phi$,
one can effectively construct a B\"uchi automaton $\mathcal{B}_\phi$ which recognizes exactly
the set of words $w$ over sets of expressions in $At(\phi)$ satisfying $\phi$ (see e.g.~\cite{BK08} for details).

Let $\phi$ be a formula of our temporal logic for heaps, and $P$ be a
$k$-bounded Shylock program. To check if $P$
satisfies the formula $\phi$, we have to check if all the computations
of the pushdown system $k$-${\cal P}_P$ starting from the initial configuration
$\langle H_0,p_0\bullet Z\rangle$ satisfy $\phi$. This amounts to synchronizing the pushdown
system $k$-$\mathcal{P}_P$ with the B\"uchi automaton $\mathcal{B}_{\neg \phi}$ and
checking if the resulting B\"uchi pushdown system has an accepting run, i.e., a run
starting from the initial states of the two systems which visits infinitely
often configurations whose control locations projected into the states of the
B\"uchi automaton are final~\cite{ES01,Sch02b}.

The problem of finding an accepting run of a B\"uchi pushdown system can be
reduced to that of finding a \emph{repeated head} reachable from the initial
configuration~\cite{ES01,Sch02b}. Computing the repeating heads is typically
developed in two phases. In the first phase, one constructs the \emph{head reachability
graph} $\mathcal{G}$ associated with a  B\"uchi pushdown system, while in the second
phase, $\mathcal{G}$ is analyzed to identify those nodes of the graph which are
repeated heads. To avoid redundant computations, it suffices to construct the
\emph{head reachability graph} $\mathcal{G}$ restricted to those configurations
reachable from an initial configuration of the B\"uchi pushdown system. This can
be done using forward reachability analysis, by using the so called ${\it post}^*$
method. For more details, see, e.g., Chapter 3 in~\cite{Sch02b}.


\paragraph{Shylock and \cite{BFQ07}} As discussed in the introduction, the related work closest to our approach is presented in \cite{BFQ07}. While \cite{BFQ07} considers only reachability, this could be extended to a full model checking procedure along the lines discussed in this paper.
A first view on the two works shows that \cite{BFQ07} has a very high level solution to the problem while we give an explicit solution to it.  The semantics of procedure call and specifically of the procedure return as given in \cite{BFQ07} are stated in terms of abstract graph isomorphisms, and it is not clear how they should be implemented. In contrast, in this paper we give a purely symbolic characterization of procedure call and return.

Now we present a more detailed view on the difference between the semantics given in the two works. The procedure call in  \cite{BFQ07} employs  the cut point mechanism as well, but it also \emph{cleans} the heap of the currently unreachable objects. Hence, their procedure call passes to the callee only the strictly visible heap of the caller, actually an isomorphic instance of it. In contrast, Shylock's procedure call relies on the cut points as well, but it doesn't necessarily clean the heap (though it could). Instead, Shylock \emph{reuses} object identities on demand, during the object creation statements of the procedure, while  \cite{BFQ07} doesn't pay attention to the body of the procedure because of the initial cleaning. Though Shylock may seem lazy w.r.t. cleaning, it's memory reuse mechanism acts in fact as a \emph{localized} cleaning. This  pays off during the execution of the procedure returns. Namely, in  \cite{BFQ07} the procedure return has to proceed by  renaming the \emph{entire} visible object space, such that it can synchronize the current heap with the caller's heap. Meanwhile, Shylock renames \emph{only} the name clashes, i.e., the objects at the intersection of the caller's global heap and the callee's  current purely local heap. These differences induce a different reasoning during the verification phase. Namely, while Shylock can afford to use heap equality during the model checking phase, the reachability procedure in  \cite{BFQ07} has to be performed on normal forms of the heaps (i.e., on the representatives of the graph isomorphic equivalence class).
We are not sure if Shylock maintains strictly one representative of each isomorphic class, but we plan to study this particular aspect in the near future.

\section{Conclusions}\label{sec-concl}

In the presence of recursive procedures and local variables, an unbounded number of objects can be allocated
either on the call stack using local variables, or, anonymously, on the heap using reference fields.
In this paper we discussed \emph{Shylock}, a language which supports these features,
together with a formal abstract semantics which allows model checking in
the context of bounded visible heaps. We introduced
a temporal logic for specifying properties of the heap, and
discussed a procedure for checking these properties against Shylock programs.

\paragraph{Future work}
Shylock's improved semantics has been implemented in the $\mathbb{K}$ framework. We are currently
implementing a general model checking technique for recursive programs defined in $\mathbb{K}$, from
which we would obtain a Shylock model checker along the lines described in this paper. Further
we are investigating the expressive power of programs with a bounded visible heap.

\nocite{*}
\bibliographystyle{eptcs}

\end{document}